\documentclass[conference]{IEEEtran}
\usepackage[T1]{fontenc}
\usepackage{amsmath,amssymb,amsthm}
\usepackage{booktabs}
\usepackage{cite}
\usepackage{microtype}
\usepackage{balance}
\usepackage[hidelinks]{hyperref}
\newcommand{\Ing}{\operatorname{Ing}}
\newcommand{\cD}{\mathcal D}
\newcommand{\cR}{\mathcal R}
\newtheorem{theorem}{Theorem}
\newtheorem{proposition}{Proposition}
\hypersetup{pdftitle={Improved Bounds on the Ingleton Score of Four Random Variables},pdfauthor={Jiang Chang}}
\title{Improved Bounds on the Ingleton Score\\ of Four Random Variables}
\author{\IEEEauthorblockN{Jiang Chang}
\IEEEauthorblockA{College of Computer Science and Software Engineering\\Hohai University, Nanjing, China\\\texttt{science614@163.com}}}
\begin{document}
\maketitle

\begin{abstract}
For four finite discrete random variables, we prove that the negative of the Ingleton expression, divided by the joint entropy, is at most $19/124\approx0.15323$, improving the previously reported upper bound $3/19$ by $11/2356$. The proof uses a copy program with four copy steps and six auxiliary variables, certified by exact rational linear-programming certificates. We also show that two published information inequalities, due to Dougherty--Freiling--Zeger and Csirmaz, imply the bound $5/32$ by an explicit nonnegative combination with Shannon inequalities, and that $5/32$ is the exact optimum both of the union of the three published inequality catalogues considered here and of the full copy-lemma extensions of all $213$ distinct copy programs occurring in them. The new program is not among these programs, and the bound $19/124$ remains optimal when all of them and all catalogue inequalities are added. Ancillary files provide the parsed catalogues, exact primal and dual certificates, exact extension certificates, and verifiers that use only the Python standard library.
\end{abstract}

\begin{IEEEkeywords}
Information inequalities, entropy region, Ingleton inequality, Ingleton score, copy lemma, linear programming.
\end{IEEEkeywords}

\section{Introduction}

Information inequalities describe the entropy region and constrain information-theoretic optimization problems, including network coding~\cite{ChanYeung2002,DFZ2007}. Ingleton's inequality holds for linear ranks~\cite{Ingleton1971}, whereas entropy vectors can violate it; the extent of this violation is one quantitative measure of the difference between these settings~\cite{MatusCsirmaz2016}.

For finite discrete random variables $A,B,C,D$, write
\begin{equation}
\begin{aligned}
\Ing={}&I(A;B\mid C)+I(A;B\mid D)\\
&+I(C;D)-I(A;B),
\end{aligned}
\label{eq:ingleton}
\end{equation}
and define the largest normalized violation by
\begin{equation}
M=\sup_{H(ABCD)>0}\frac{-\Ing}{H(ABCD)},
\label{eq:score}
\end{equation}
where the supremum is over all such joint distributions, with arbitrary finite alphabets. With this sign convention, a larger value represents a stronger violation. Dougherty, Freiling, and Zeger~\cite{DFZ2011} call $\Ing/H(ABCD)$ the Ingleton score, so that the infimum of their score is $-M$; Boston and Nan~\cite{BostonNan2020} use the opposite sign, under which $M$ is the supremum of the score.

Constructions of Mat\'{u}\v{s} and Csirmaz~\cite{MatusCsirmaz2016} and Boston and Nan~\cite{BostonNan2020} give normalized violations approximately $0.09243$ and $0.0925000777$, respectively. On the inequality side, Dougherty, Freiling, and Zeger~\cite{DFZ2011} established $M\le 3/19$. G\"urp\i nar and Romashchenko~\cite{GurpinarRomashchenko2019} described this as the best known bound in 2019, and Boston and Nan~\cite{BostonNan2020} still reported this upper bound in 2020.

We obtain
\begin{equation}
M\le\frac{19}{124}=0.153225\ldots,\qquad
\frac3{19}-\frac{19}{124}=\frac{11}{2356}.
\label{eq:result}
\end{equation}
The bound is proved in Section~\ref{sec:new} from a single copy program with four copy steps and six auxiliary variables. On the way, we observe that two inequalities already present in the published catalogues imply
\begin{equation}
M\le\frac5{32}=0.15625
\label{eq:result532}
\end{equation}
by a short identity (Section~\ref{sec:532}), and that $5/32$ is exactly what the three published catalogues together, and the full copy extensions of all copy programs occurring in them, can imply about this objective (Sections~\ref{sec:catalogue} and~\ref{sec:full}). To the best of our knowledge, no upper bound below $3/19$ has been reported before. The bound $5/32$ follows from inequalities published in~\cite{DFZ2011,Csirmaz2013}, whose contribution we fully acknowledge.

\section{Preliminaries}

Let $N=\{A,B,C,D\}$ and $h_S=H(S)$ for $S\subseteq N$, with $h_\varnothing=0$. An entropy vector consists of the fifteen coordinates $h_S$, $S\ne\varnothing$. The logarithm base is immaterial for the normalized score. The entropy-coordinate form of~\eqref{eq:ingleton} is
\begin{equation}
\begin{aligned}
\Ing={}&h_{AB}+h_{AC}+h_{AD}+h_{BC}+h_{BD}\\
&-h_A-h_B-h_{CD}-h_{ABC}-h_{ABD}.
\end{aligned}
\label{eq:entropy-form}
\end{equation}
The elemental Shannon inequalities are
\begin{equation}
\begin{aligned}
H(X\mid N\setminus\{X\})&\ge0 &&(X\in N),\\
I(X;Y\mid Z)&\ge0 &&(Z\subseteq N\setminus\{X,Y\}),
\end{aligned}
\label{eq:shannon}
\end{equation}
where $X\ne Y$ and each unordered pair is used once. There are $4+6\cdot4=28$ such inequalities. These inequalities are insufficient to characterize the entropy region~\cite{ZhangYeung1998}.

All inequalities considered below are homogeneous. Their linear programming outer bounds can therefore be optimized under $h_N=1$. The symmetry group of the objective that we use is
\begin{equation}
G=\{\mathrm{id},(AB),(CD),(AB)(CD)\}.
\label{eq:group}
\end{equation}
For a linear information expression $F$, $gF$ denotes the expression obtained by applying $g$ to its variable labels. Both $\Ing$ and $h_N$ are invariant under $G$.

\section{Two Published Information Inequalities}

The two inputs are
\begin{equation}
\begin{aligned}
\cD={}&2\Ing+2I(A;B\mid C)+2I(A;C\mid B)\\
&+I(B;C\mid A)+3I(C;D\mid A)\ge0
\end{aligned}
\label{eq:D}
\end{equation}
and
\begin{equation}
\begin{aligned}
\cR={}&2\Ing+I(A;B\mid C)+3I(A;C\mid B)\\
&+I(B;C\mid A)+3I(C;D\mid A)\ge0.
\end{aligned}
\label{eq:R}
\end{equation}
Inequality~\eqref{eq:D} is equation (50) of~\cite{DFZ2011}, also listed as entry 4 in its Section VIII; it is also entry 4 in Appendix B of~\cite{Csirmaz2013}. Inequality~\eqref{eq:R} is entry 3 in that appendix, and entry 1 in the appendix to the journal version~\cite{Csirmaz2016}.

The coefficient conventions in these sources differ. In the nine-term convention of~\cite{Csirmaz2013}, beginning with $\Ing$ and followed by its eight listed conditional mutual informations, the respective vectors are
\begin{equation}
\begin{aligned}
\cD &: (2,2,0,2,1,0,0,3,0),\\
\cR &: (2,1,0,3,1,0,0,3,0).
\end{aligned}
\label{eq:coefficients}
\end{equation}
In the convention of~\cite{DFZ2011}, where a multiple of $I(A;B)$ is on the left, the vector for~\eqref{eq:D} is $(2,4,2,1,2,0,0,2,3)$. Moving that term to the right and substituting~\eqref{eq:ingleton} gives exactly~\eqref{eq:D}. Thus~\eqref{eq:D} and~\eqref{eq:R} use one common entropy-coordinate convention.

For completeness, the copy strings attached to these inequalities are
\begin{equation}
\begin{aligned}
\cD &: \texttt{rs=bd:ac; t=c:adr},\\
\cR &: \texttt{rs=bd:ac; tu=cd:abs; v=b:acrst}.
\end{aligned}
\label{eq:copy-strings}
\end{equation}
Lowercase letters name the corresponding original or auxiliary variables. A step $Y=X:Z$ creates a new tuple $Y$ with the same joint law with $Z$ as the source tuple $X$, and, given $Z$, conditionally independent of all previously existing variables. Successive steps retain all earlier variables. A tuple copy preserves the joint source marginal, including every subtuple marginal.

We verified~\eqref{eq:D} and~\eqref{eq:R} directly from these standard copy constructions. The resulting systems have, respectively, 127 and 511 nonempty entropy coordinates, 679 and 4,617 elemental Shannon rows, and 22 and 71 copy equalities. Their ancillary certificates express $\cD$ and $\cR$ as nonnegative combinations of Shannon rows plus signed combinations of copy equalities. All coefficients are integers for $\cD$ and have denominators dividing 6 for $\cR$. These homogeneous identities establish the two inequalities for every original joint distribution.

\section{The Bound 5/32}
\label{sec:532}

\begin{theorem}\label{thm:main}
For every four finite discrete random variables,
\begin{equation}
64\Ing+10H(ABCD)\ge0.
\label{eq:theorem}
\end{equation}
Consequently, the supremum in~\eqref{eq:score} satisfies $M\le5/32$.
\end{theorem}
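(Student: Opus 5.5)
The approach is to exhibit $64\Ing+10H(ABCD)$ as an explicit nonnegative combination of the two published inequalities~\eqref{eq:D} and~\eqref{eq:R}, their images under the symmetry group $G$ of~\eqref{eq:group}, and elemental Shannon inequalities~\eqref{eq:shannon}. Since every term in such a combination is nonnegative for every joint distribution, \eqref{eq:theorem} follows at once. Dividing by $64H(ABCD)$ when $H(ABCD)>0$ then gives $-\Ing/H(ABCD)\le 10/64=5/32$, which is the claimed bound on $M$.

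First, I would symmetrize. Because $\Ing$ and $h_N$ are $G$-invariant, any certificate can be averaged over $G$. I may therefore look for a combination of the form
\[
\alpha\sum_{g\in G} g\cD+\beta\sum_{g\in G} g\cR+\sum_k \gamma_k S_k ,
\]
with $\alpha,\beta\ge0$ and each $S_k$ a $G$-orbit sum of elemental Shannon rows. Each of $g\cD$ and $g\cR$ contributes $2\Ing$, so the $\Ing$ coefficient is $8(\alpha+\beta)$ plus whatever the Shannon part contributes. The leftover conditional mutual informations are $I(A;B\mid C)$, $I(A;C\mid B)$, $I(B;C\mid A)$, $I(C;D\mid A)$ and their $G$-images. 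These must be cancelled against Shannon rows carrying negative signs on $\Ing$ or positive multiples of $h_N$.

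The key step is to bound these leftover terms. I would use standard Shannon identities such as $I(A;B\mid C)+I(A;C)=I(A;BC)\le H(A)$ and chain-rule expansions. The aim is to show that the $G$-orbit sums of the leftover terms, with the right weights, are at most a multiple of $h_N$ minus a multiple of $\Ing$. Concretely, I would
\begin{itemize}
\item set up the small LP whose primal is $\max -\Ing$ subject to $h_N=1$, the $28$ elemental Shannon rows, and the $8$ rows $g\cD\ge0$, $g\cR\ge0$;
\item read off an optimal dual solution;
\item scale it so that the coefficient of $\Ing$ is $64$ and that of $h_N$ is $10$.
\end{itemize}
The dual multipliers give $\alpha,\beta,\gamma_k$ directly, and the expected optimum $5/32$ confirms the scaling.

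Finally, I would verify the resulting identity coordinatewise in the fifteen entropy coordinates $h_S$, using the entropy form~\eqref{eq:entropy-form} of $\Ing$. This is a finite, exact rational check.

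The main obstacle is finding the right Shannon rows, that is, choosing which of the $28$ rows absorb the leftover mutual informations without introducing uncancelled terms. I would try first the rows $H(X\mid N\setminus\{X\})\ge0$ together with unconditional and singly conditioned mutual informations among $A,C$ and $B,C$. The leftover terms all involve $C$ paired with $A$ or $B$, and their $(CD)$-images involve $D$, which suggests these rows are where the cancellation happens. If no clean hand choice appears, the LP dual settles it mechanically, and the proof then reduces to displaying that identity.
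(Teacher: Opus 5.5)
This is the paper's proof. The dual certificate you plan to extract exists, since the paper notes that the 28 Shannon rows together with the eight $G$-images of $\cD$ and $\cR$ already have optimum exactly $5/32$. The certificate displayed there has $\alpha=2$, $\beta=1$ and Shannon part $10\sum_{X\in N}H(X\mid N\setminus\{X\})+30I(A;B\mid CD)+10I(C;D)+2I(C;D\mid A)+2I(C;D\mid B)$, and it is verified coefficientwise.

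One sign in your heuristic is reversed. The $\cD,\cR$ terms supply only $24\Ing$, so the leftover mutual informations must be bounded by $40\Ing+10h_N$, which is a multiple of $h_N$ \emph{plus} a multiple of $\Ing$. Also, apart from the $H(X\mid N\setminus\{X\})$ rows, the absorbing rows involve the pair $C,D$ and the pair $A,B$ given $CD$, not the $A,C$ and $B,C$ pairs you guessed.
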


\begin{proof}
The following identity holds for every vector of the fifteen entropy coordinates:
\begin{equation}
\begin{aligned}
64\Ing+10h_N
={}&\sum_{g\in G}(2g\cD+g\cR)\\
&+10\sum_{X\in N}H(X\mid N\setminus\{X\})\\
&+30I(A;B\mid CD)+10I(C;D)\\
&+2I(C;D\mid A)+2I(C;D\mid B).
\end{aligned}
\label{eq:p532}
\end{equation}
To verify it, expand the conditional mutual informations into entropy coordinates and substitute~\eqref{eq:D} and~\eqref{eq:R}. The coefficient of $h_N$ is 10; the remaining coefficients are 64 times those in~\eqref{eq:entropy-form}. The ancillary verifier checks each of the fifteen coefficients exactly.

Every $g\cD$ and $g\cR$ is nonnegative by the corresponding published inequality with its variables renamed. The conditional entropies in the second line of~\eqref{eq:p532} and all mutual informations in its last two lines are nonnegative Shannon terms. Every scalar coefficient on the right is positive. This proves~\eqref{eq:theorem}. If $h_N>0$, division by $64h_N$ gives $-\Ing/h_N\le10/64=5/32$.

All four renamed expressions are evaluated on the same arbitrary joint distribution. No invariance of that distribution under $G$ is required. If $h_N=0$, all four variables are constant and~\eqref{eq:theorem} holds with equality.
\end{proof}

\section{Exact Optima of the Finite Catalogue Bounds}
\label{sec:catalogue}

For a finite collection $\mathcal C$ of published inequalities, let $P(\mathcal C)$ be the polyhedron defined by the 28 Shannon rows, all 24 variable permutations of every member of $\mathcal C$, and $h_N=1$. Write
\begin{equation}
U(\mathcal C)=\max_{h\in P(\mathcal C)}-\Ing(h).
\label{eq:catalogue-lp}
\end{equation}
Since every normalized entropy vector satisfies these constraints, $M\le U(\mathcal C)$.

We use the 214 inequalities tabulated in~\cite{DFZ2011}, the 435 entries of Appendix B in the 2013 arXiv version~\cite{Csirmaz2013}, and the 264 entries of Appendix B in the journal version~\cite{Csirmaz2016}. The last list contains newly listed inequalities and is a different catalogue, rather than a cumulative replacement for the 2013 list. Table~\ref{tab:optima} gives the exact optima. The numbers of rows retain repeated permutations and include the 28 Shannon rows.

\begin{table}[ht]
\caption{Exact finite-catalogue upper bounds}
\label{tab:optima}
\centering
\begin{tabular}{@{}lrr@{}}
\toprule
Catalogue & Rows & $U(\mathcal C)$\\
\midrule
DFZ, 214 entries & 5,164 & $3/19$\\
Csirmaz 2016, 264 entries & 6,364 & $59/376$\\
Csirmaz 2013, 435 entries & 10,468 & $5/32$\\
DFZ + Csirmaz 2016 & 11,500 & $5/32$\\
\bottomrule
\end{tabular}
\end{table}

\begin{proposition}
The optimum is $5/32$ for the Csirmaz 2013 catalogue, for the union of DFZ with Csirmaz 2016, and for the union of all three catalogues. It is also $5/32$ for the smaller system consisting of the 28 Shannon rows and the four $G$-images of each of $\cD$ and $\cR$.
\end{proposition}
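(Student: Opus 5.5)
Each claimed optimum is an equality $U=5/32$, so the proof splits into an upper bound and a matching lower bound. The upper bound is already available. Every system in the proposition contains, up to variable permutations, the 28 Shannon rows and the four $G$-images of $\cD$ and $\cR$. This holds because $\cD$ is entry 4 and $\cR$ is entry 3 of the Csirmaz 2013 appendix, $\cD$ is in DFZ, and $\cR$ is entry 1 of Csirmaz 2016. Each system therefore contains the smallest one, and identity~\eqref{eq:p532} is a dual certificate showing $-\Ing\le 10/64=5/32$ on it. Since enlarging a constraint system can only shrink the polyhedron, $U\le 5/32$ follows for every system listed.

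The lower bound requires exhibiting feasible points, and the key observation is monotonicity: it suffices to find one rational vector $h^\ast$ with $h^\ast_N=1$ and $-\Ing(h^\ast)=5/32$ that lies in $P(\mathcal C)$ for the union of all three catalogues. The same $h^\ast$ is then feasible for every subsystem, including Csirmaz 2013 alone, DFZ $\cup$ Csirmaz 2016, and the small system, which attains $5/32$ exactly. I would obtain $h^\ast$ by solving the union LP, 20,000-odd rows, in exact rational arithmetic, or by solving in floating point and then rationalizing the optimal basis. I would first try to restrict to $G$-symmetric vectors, since averaging any optimal point over $G$ preserves both the objective and feasibility. That is valid because the row set is closed under all permutations and $\Ing$, $h_N$ are $G$-invariant, and it cuts the search to a lower-dimensional, likely small-denominator vertex.

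Verification then consists of checking, exactly over $\mathbb Q$, that $h^\ast$ satisfies all 28 Shannon rows and all 24 permutations of each of the 214+435+264 catalogue entries, and that $-\Ing(h^\ast)=5/32$; the ancillary verifier would do this row by row. The expected main obstacle is not the argument but the reliability of the input data. The catalogue entries must be parsed correctly under their differing coefficient conventions, discussed around~\eqref{eq:coefficients}, since one mis-parsed row could make $h^\ast$ look infeasible or, worse, falsely feasible. I would cross-check the parsing by confirming that the DFZ catalogue alone reproduces $3/19$ and Csirmaz 2016 alone reproduces $59/376$, matching Table~\ref{tab:optima}. If no single vertex happens to be feasible for the full union, I would fall back on separate exact optimal points for each system, which is still enough for each claimed equality.
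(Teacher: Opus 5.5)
Your proposal is correct and follows the paper's proof. The upper bound is identity~\eqref{eq:p532} applied to any system that contains the Shannon rows and the $G$-images of $\cD$ and $\cR$. The lower bound is a single rational point that is feasible for the union of all three catalogues, checked exactly against all 21,940 rows and therefore feasible for every subsystem. The paper states that point explicitly as the $G$-symmetric vector $h^*$ of~\eqref{eq:primal}, which is the kind of vertex your symmetrization anticipates. Your fallback to separate optimal points is never needed, since any optimal point of the union LP already serves all four systems.
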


\begin{proof}
Consider the rational vector $h^*$ defined by
\begin{equation}
\begin{array}{c|c}
S & h^*_S\\\hline
A,B & 9/16\\
C,D & 1/2\\
AB & 27/32\\
AC,AD,BC,BD & 25/32\\
CD,ABC,ABD,ACD,BCD,ABCD & 1
\end{array}
\label{eq:primal}
\end{equation}
It has $h_N^*=1$ and $\Ing(h^*)=-5/32$. Direct rational evaluation shows that it satisfies every Shannon row and every permutation of every inequality in all three catalogues. This is a finite check of 21,940 rows when the three catalogues are combined; the ancillary files contain the complete coefficient data and verifier. Thus each of these polyhedra has objective value at least $5/32$.

For the 2013 catalogue, identity~\eqref{eq:p532} supplies the matching upper bound because both $\cD$ and $\cR$ occur in that list. For DFZ combined with the 2016 list, $\cD$ occurs in DFZ and $\cR$ occurs in the 2016 list. The same identity applies to their union and to the stated smaller system. Hence all these optima are exactly $5/32$.
\end{proof}

For the other two rows of Table~\ref{tab:optima}, the ancillary files give exact rational primal and dual solutions. In the convention $q_i(h)\ge0$ for constraint rows, each dual certificate has the form
\begin{equation}
\Ing=\nu h_N+\sum_i\lambda_i q_i,
\qquad \lambda_i\ge0,
\label{eq:catalogue-dual}
\end{equation}
with $\nu=-3/19$ or $-59/376$, respectively. A feasible primal vector attains $\Ing=\nu$ in each case. Exact feasibility and coordinate equality therefore establish both bounds and their optimality. The same certificate format is supplied for all five catalogue combinations, including the union of all three.

\setlength{\splitmaxdepth}{0pt}
\balance
\section{Full Copy Extensions of the Catalogued Programs}
\label{sec:full}

A copy step creates a new tuple $F$ from a source tuple $X$ over a conditioning tuple $Z$, where $X$ and $Z$ consist of existing variables and are disjoint; parenthesized groups in a copy string denote a single new variable copying a tuple~\cite{DFZ2011,Csirmaz2013}. Writing $O$ for the set of variables existing before the step and indexing the components of $X$ and $F$ by a common set $T$, the step contributes the equalities
\begin{align}
H(F_S W)&=H(X_S W), \label{eq:C1}\\
I(F;O\setminus Z\mid Z)&=0, \label{eq:C2}\\
H(F_S X_{S'} W)&=H(X_S F_{S'} W), \label{eq:C3}
\end{align}
for all $S,S'\subseteq T$, $S\neq\varnothing$ in~\eqref{eq:C1}, and all $W\subseteq Z$. Equalities~\eqref{eq:C1} and~\eqref{eq:C2} hold for the standard construction $p(o,f)=p(o)\,p_{X\mid Z}(f\mid z)$. Under this construction, $X$ and $F$ are conditionally independent and identically distributed given $Z$, so the pair $(X,F)$ is exchangeable given $Z$; this gives~\eqref{eq:C3}.

For a copy program, its \emph{block} consists of the entropy coordinates of all nonempty subsets of the original and new variables, the elemental Shannon inequalities on these variables, and the equalities~\eqref{eq:C1}--\eqref{eq:C3} of every step. Every entropy vector of four random variables extends to a vector satisfying the block, namely the entropy vector of the constructed joint distribution. Each of the four $G$-images of a program receives its own block; different blocks share only the fifteen original coordinates.

After renaming new variables, the three catalogues contain $213$ distinct copy programs, with at most four steps and five new variables.

\begin{proposition}
Add to the polyhedron $P(\mathcal C)$ of the union of the three catalogues the blocks of all four $G$-images of all $213$ programs, that is, $852$ blocks. The optimum of $-\Ing$ over the resulting polyhedron is $5/32$.
\end{proposition}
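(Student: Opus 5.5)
The upper bound comes for free from the previous proposition: adding blocks only adds constraints, and the new polyhedron projects into $P(\mathcal C)$ for the union of the three catalogues. The optimum there is already $5/32$, certified by identity~\eqref{eq:p532}. Any point of the extended polyhedron restricted to the fifteen original coordinates lies in $P(\mathcal C)$, so $-\Ing\le 5/32$ on it. The whole task is therefore the lower bound. For this I would exhibit the same rational vector $h^*$ of~\eqref{eq:primal} and show that it extends to a vector satisfying each of the $852$ blocks. Different blocks share only the original coordinates, so the extensions can be chosen independently, block by block. Feasibility of the full system then reduces to $852$ separate feasibility problems, each with $h^*$ fixed on the fifteen original coordinates.

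For each block I would first try a structural extension rather than a blind LP. A natural candidate is to realize $h^*$, or a scaled copy of it, as a point that behaves like an entropic or polymatroid-with-copy-closure point. I would then define the new coordinates by the copy construction formulas at the polymatroid level. Concretely, for a step $F=X:Z$ on an existing polymatroid $h$ over $O$, set
\[
h(F_S X_{S'} W\,Y)=h(X_{S\cup S'}\ldots)\ \text{via the free amalgam over }Z,
\]
i.e.\ $h'(P\cup Q)=h(P\cup Z')+h(Q^X\cup Z'')-h(Z)$-type formulas for the conditionally independent gluing. These automatically give~\eqref{eq:C1}--\eqref{eq:C3}. The difficulty is that a free amalgam of polymatroids over $Z$ need not be a polymatroid unless $Z$ is a modular set in the relevant restriction. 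So this works only for some programs.

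Hence the practical plan is computational. For each of the $852$ blocks, I would solve the LP feasibility problem in exact rational arithmetic, or in floating point followed by rationalization and exact verification. The unknowns are the new coordinates, with $h^*$ fixed, subject to the block's Shannon rows and copy equalities. The output is an explicit rational extension vector $e_b$ per block, which a verifier checks row by row. The $G$-symmetry can halve the work: $h^*$ is $G$-invariant, so an extension for program $\pi$ permuted by $g$ yields one for $g\pi$ by relabeling. That leaves the $213$ programs, each checked once.

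The main obstacle is that some block may be infeasible at $h^*$: a catalogued program could cut off $h^*$ even though all catalogued inequalities derived from it are satisfied. If that happens, $h^*$ must be replaced by another optimal point of $P(\mathcal C)$ attaining $-\Ing=5/32$. I would search the optimal face, for example by maximizing a slack or minimizing infeasibility over the joint LP containing all blocks simultaneously, with $-\Ing=5/32$ imposed. A feasible exact solution of that large LP is then itself the certificate. If that too failed, the claimed value would be false, since the LP optimum would be below $5/32$. So I expect the certificate search to succeed, with the size of the joint system (up to nine variables per block, $511$ coordinates, thousands of rows, times $852$) being the real practical hurdle, handled by the block decomposition above.
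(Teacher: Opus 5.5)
Your proposal is correct and follows the paper's proof. The upper bound comes from Theorem~\ref{thm:main} (identity~\eqref{eq:p532}, whose constraints are already in the system), and the lower bound comes from exact rational extensions of $h^*$ to each block, which combine into one feasible point because blocks share only the fifteen original coordinates. Your shortcut is also valid: $h^*$ is $G$-invariant, so an extension for a program relabels into one for each of its $G$-images and only the $213$ programs need solving. The paper instead simply supplies all $852$ vectors, and since $h^*$ itself extends to every block, the free-amalgam construction and the fallback search for another optimal point are not needed.
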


\begin{proof}
Theorem~\ref{thm:main} gives the upper bound. The ancillary files contain, for every program and every $G$-image, an exact rational vector on the variables of its block whose fifteen original coordinates are those of~\eqref{eq:primal} and which satisfies every row of the block. Since the blocks share only the original coordinates, these vectors combine into one feasible point with objective value $5/32$.
\end{proof}

Consequently, any valid constraint that improves the bound below $5/32$ is violated by the vector~\eqref{eq:primal}. No published copy program, used as a full extension with the exchange equalities~\eqref{eq:C3} and all four $G$-images, provides such a constraint.

\section{A New Copy Program}
\label{sec:new}

Consider the copy program
\begin{equation}
\begin{aligned}
\mathcal P:\quad &\texttt{rs=bd:ac; tu=cd:abs;}\\
&\texttt{v=b:acrst; w=b:crstv}.
\end{aligned}
\label{eq:newprogram}
\end{equation}
Its first three steps form the copy program of $\cR$ in~\eqref{eq:copy-strings}; the last step copies $b$ over the five variables $c,r,s,t,v$. It has four steps and six new variables, so its block has ten variables and $1023$ nonempty entropy coordinates. It is not among the $213$ catalogued programs, which use at most five new variables. It was found by a systematic search over local modifications of the catalogued programs, in which one step is added, deleted or changed.

\begin{theorem}\label{thm:new}
For every four finite discrete random variables,
\begin{equation}
124\Ing+19H(ABCD)\ge0.
\end{equation}
Consequently, the supremum in~\eqref{eq:score} satisfies $M\le19/124$.
\end{theorem}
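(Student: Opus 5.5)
The proof will follow the pattern of Theorem~\ref{thm:main}: exhibit an exact homogeneous identity that writes $124\Ing+19h_N$ as a nonnegative combination of valid terms. Here the terms are Shannon rows and copy equalities on the $G$-images of the block of $\mathcal P$, rather than catalogued inequalities. First I would fix the block of $\mathcal P$ precisely. It has ten variables $a,b,c,d,r,s,t,u,v,w$, the $1023$ entropy coordinates, all elemental Shannon rows on ten variables, and the equalities~\eqref{eq:C1}--\eqref{eq:C3} for the four steps. The steps are $rs$ copying $bd$ over $ac$; $tu$ copying $cd$ over $abs$; $v$ copying $b$ over $acrst$; and $w$ copying $b$ over $crstv$. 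For each $g\in G$ I would form the renamed block $g\mathcal P$, sharing only the fifteen original coordinates. As noted before Section~\ref{sec:full}'s proposition, every entropy vector of $(A,B,C,D)$ extends to a point of each block, namely the entropy vector of the iterated standard construction $p(o,f)=p(o)p_{X\mid Z}(f\mid z)$. So any consequence of the combined system is a valid inequality for all four-variable distributions.

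Second, I would solve the linear program: maximize $-\Ing$ subject to $h_N=1$, the 28 original Shannon rows, and the four $G$-blocks. Possibly I would start with a single block and symmetrize afterwards, since $\Ing$ and $h_N$ are $G$-invariant. Averaging a certificate over $G$ preserves validity, so one block may suffice if its dual already yields $19/124$. I would extract an optimal dual solution and rationalize it exactly. The target is
\begin{equation*}
124\Ing+19h_N=\sum_{g\in G}\Bigl(\sum_i\lambda_i\, g q_i+\sum_j\mu_j\, g e_j\Bigr),
\end{equation*}
where $q_i$ are Shannon rows of the block with $\lambda_i\ge0$, and $e_j$ are copy equalities with $\mu_j$ of arbitrary sign. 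In this identity all auxiliary coordinates must cancel, leaving exactly the fifteen coefficients of $124\Ing+19h_N$. The identity holds for every vector on the combined coordinates. Evaluating it at the extended entropy vector makes each $e_j$ vanish and each $q_i$ nonnegative, which gives $124\Ing+19H(ABCD)\ge0$. Dividing by $124h_N$ when $h_N>0$ gives $M\le19/124$; when $h_N=0$ the inequality holds with equality.

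Third, I would certify optimality of $19/124$ for this system, although the theorem itself only needs the dual. To do this I would produce a rational primal point satisfying all block rows with $\Ing=-19/124$. This point also confirms that the certificate is not improvable within this program.

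The main obstacle is computational and exactness-related. A ten-variable block has $1023$ coordinates and roughly $10+45\cdot 2^8=11{,}530$ elemental rows per block, plus many copy equalities, especially the exchange equalities~\eqref{eq:C3} for the two-variable copies. Floating-point LP solutions must be converted into an exact rational certificate in which the auxiliary coordinates cancel exactly. I would first try to reduce the system: use a single block, drop \eqref{eq:C3}, and identify coordinates forced equal by~\eqref{eq:C1}. Then I would re-solve on the support of the floating dual and recover rationals by exact Gaussian elimination on the active rows. Finally I would verify the fifteen-coefficient identity in exact arithmetic, as was done for~\eqref{eq:p532}.
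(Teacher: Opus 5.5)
Your proposal is correct and follows essentially the same route as the paper. The paper's proof is an exact rational dual certificate, $\Ing=-\frac{19}{124}h_N+\sum_i\lambda_i s_i+\sum_j\mu_j e_j$, with $\lambda_i\ge0$ on Shannon rows and signed $\mu_j$ on copy equalities of the four $G$-image blocks of $\mathcal P$ (4047 shared coordinates), evaluated at the copy-construction extension of the entropy vector. Your $G$-symmetrized form of the identity loses no generality, and your reductions (a single block, dropping the exchange equalities) are optional shortcuts that may weaken the relaxation below what is needed; the paper's certificate uses the full four-block system.
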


\begin{proof}
Let $\Pi$ be the polyhedron formed, as in Section~\ref{sec:full}, by the blocks of the four $G$-images of $\mathcal P$, which share the fifteen original coordinates; it has $4047$ coordinates. The entropy vector of $A,B,C,D$ extends to a point of $\Pi$, namely to the entropy vectors of the four joint distributions constructed by the copy steps. The ancillary file \texttt{new\_program\_19\_124.json} gives rational multipliers $\lambda_i\ge0$ on $1{,}862$ elemental Shannon rows $s_i$ and $\mu_j$ on $380$ copy equalities $e_j$ of these blocks such that
\begin{equation}
\Ing=-\frac{19}{124}\,h_N+\sum_i\lambda_i s_i+\sum_j\mu_j e_j
\label{eq:newdual}
\end{equation}
holds identically in all $4047$ coordinates. Evaluating~\eqref{eq:newdual} at an extension of the entropy vector gives $\Ing\ge-\frac{19}{124}h_N$.
\end{proof}

The same ancillary file contains a rational point of $\Pi$ with $h_N=1$ and $-\Ing=19/124$. Its original coordinates are
\begin{equation}
\begin{aligned}
&h'_A=h'_B=69/124,\qquad h'_C=h'_D=1/2,\\
&h'_{AB}=105/124,\\
&h'_{AC}=h'_{AD}=h'_{BC}=h'_{BD}=193/248,
\end{aligned}
\label{eq:newpoint}
\end{equation}
with all remaining original coordinates equal to $1$. Hence $19/124$ is the exact optimum of $\Pi$. The vector~\eqref{eq:newpoint} satisfies every permutation of every catalogue inequality, and it extends to each of the $852$ blocks of Section~\ref{sec:full}. Therefore $19/124$ is also the exact optimum of the relaxation that combines $\Pi$ with all catalogue inequalities and all $852$ blocks, and any valid constraint that improves the bound below $19/124$ is violated by~\eqref{eq:newpoint}.

\section{Discussion}
\label{sec:discussion}

Keeping a full copy extension can give a stronger outer bound than keeping only its displayed consequences. The copy program of $\cR$ in~\eqref{eq:copy-strings}, used as a full extension with its four $G$-images, gives $5/32$ by itself, whereas the four $G$-images of any single catalogue inequality attached to this program give at best $1/6$. For entry 219 of~\cite{Csirmaz2013}, the corresponding values are $11/70$ and $19/120$. Exact certificates for these comparisons are included in the ancillary files.

The entropy-region problem remains separated from the finite catalogue optima. The construction of~\cite{BostonNan2020} and Theorem~\ref{thm:new} place $M$ in the interval from approximately $0.0925000777$ to $19/124\approx0.153226$. The vectors~\eqref{eq:primal} and~\eqref{eq:newpoint} certify optimality of the stated polyhedral bounds; a corresponding lower bound on $M$ would require a distributional construction.

\section{Reproducibility}
\label{sec:repro}

The \texttt{anc/} directory contains \texttt{verify\_5\_32.py}, the three parsed catalogue JSON files, their exact primal and dual certificates, and the two copy certificates for $\cD$ and $\cR$. It also contains the list of the $213$ copy programs, the exact extension vectors of~\eqref{eq:primal} and~\eqref{eq:newpoint} for all $852$ blocks, the certificates for Theorem~\ref{thm:new} and for the comparisons in Section~\ref{sec:discussion}, and \texttt{verify\_full\_extensions.py}. From the source-package root, run
\begin{center}
\texttt{python3 anc/verify\_5\_32.py}\\
\texttt{python3 anc/verify\_full\_extensions.py}.
\end{center}
The verifier performs all arithmetic with the standard-library \texttt{Fraction} class. It expands~\eqref{eq:p532}, tests~\eqref{eq:primal} against every catalogue row, reconstructs the copy constraints from~\eqref{eq:copy-strings}, and verifies the exact primal and dual certificates. The second verifier regenerates the elemental Shannon inequalities and the equalities~\eqref{eq:C1}--\eqref{eq:C3} of every block from the copy strings and checks every extension vector, the certificates of Theorem~\ref{thm:new}, and the comparison certificates exactly. Both run without a numerical solver or network access. The README files specify coefficient bases, row order, source versions, file formats, and the commands for the comparisons.

\section*{Acknowledgment}

The AI systems ChatGPT (ChatGPT~6.0), Codex (\mbox{gpt-6-astra}) and Claude (Claude Opus~5.5) were used for literature search, for deriving and checking the combination of known inequalities, for the systematic search over copy programs, for writing the exact-arithmetic verification code and running the linear-programming computations, and for drafting the manuscript. All results reported here are confirmed by the exact checks described in Section~\ref{sec:repro}. The author takes full responsibility for the content.

\clearpage
\nobalance
\IEEEtriggeratref{6}
\bibliographystyle{IEEEtran}
\bibliography{refs}
\end{document}